\newsavebox\myboxA
\newsavebox\myboxB
\newlength\mylenA
\newcommand*\xoverline[2][0.75]{%
    \sbox{\myboxA}{$\m@th#2$}%
    \setbox\myboxB\null
    \ht\myboxB=\ht\myboxA%
    \dp\myboxB=\dp\myboxA%
    \wd\myboxB=#1\wd\myboxA
    \sbox\myboxB{$\m@th\overline{\copy\myboxB}$}
    \setlength\mylenA{\the\wd\myboxA}
    \addtolength\mylenA{-\the\wd\myboxB}%
    \ifdim\wd\myboxB<\wd\myboxA%
       \rlap{\hskip 0.5\mylenA\usebox\myboxB}{\usebox\myboxA}%
    \else
        \hskip -0.5\mylenA\rlap{\usebox\myboxA}{\hskip 0.5\mylenA\usebox\myboxB}%
    \fi}
\newcommand{\Xibar}{\xoverline[1.3]{\Xi}}
\newcommand{\E}{\mathbb{E}}
\newcommand{\R}{\mathbb{R}}
\newcommand{\conv}{\mathrm{conv}}
\newcommand{\convc}{\overline{\mathrm{conv}}}
\DeclareMathOperator*{\argmin}{arg \, min}
\DeclareMathOperator*{\argmax}{arg \, max}
\def\ind{{\mathchoice{1\mskip-4mu\mathrm l}{1\mskip-4mu\mathrm l}
{1\mskip-4.5mu\mathrm l}{1\mskip-5mu\mathrm l}}}
\newtheorem{theorem}{Theorem}[section]
\newtheorem{corollary}[theorem]{Corollary}
\newtheorem{proposition}[theorem]{Proposition}
\newtheorem{remark}[theorem]{Remark}
\newtheorem{example}[theorem]{Example}
\renewcommand{\le}{\leqslant}
\renewcommand{\leq}{\leqslant}
\renewcommand{\ge}{\geqslant}
\renewcommand{\geq}{\geqslant}
\title{\textbf{Examples and Counterexamples\\
 of Cost-efficiency in Incomplete Markets}}
\author{
Carole Bernard \thanks{%
C. Bernard, Grenoble Ecole de Management,  12~rue Pierre S\'{e}mard, 38000 Grenoble, France  (e-mail: 
\texttt{carole.bernard@grenoble-em.com}).}
\thanks{Vrije Universiteit Brussel, Faculty of Economics, Belgium.}
\and
Stephan Sturm \thanks{ %
S. Sturm, Worcester Polytechnic Institute, Department of Mathematical Sciences, 100 Institute Road, Worcester, MA 06109, USA
	(e-mail: \texttt{ssturm@wpi.edu})}
}
\begin{document}

\maketitle

\begin{abstract}
We present a number of examples and counterexamples to illustrate the results on cost-efficiency in an incomplete market obtained in  \cite{BS24}. These examples and counterexamples do not only illustrate the results obtained in  \cite{BS24}, but show the limitations of the results and the sharpness of the key assumptions. In particular, we make use of a simple 3-state model in which we are able to recover and illustrate all key results of the paper. This example also shows how our characterization of perfectly cost-efficient claims allows to solve an expected utility maximization problem in a simple incomplete market (trinomial model) and recover results from \cite[Chapter 3]{DS06}, there obtained using duality.
\end{abstract}

\vspace{5mm}
 
\begin{flushleft}
	 \textbf{Keywords:} Cost-efficiency, portfolio choice, law-invariant objective, utility maximization.\\
	 \textbf{Mathematics Subject Classification (2010):} 91G10, 60E15, 90B50.\\
	 \textbf{JEL classification:} C02, G11, D81, C61.
\end{flushleft}

\vspace{1cm}

{\textbf{Declarations of interest:} 
}   none.
\section{Introduction}

We present a number of examples and counterexamples to better understand the notion ofcost-efficiency in an incomplete market. In particular, we make use of a 3-state model, very simple though able to illustrate all results from \cite{BS24}. These results indeed also hold in a discrete market with  $n$ equiprobable states. We provide here a few elements how the various propositions from the paper can be rederived in this setting. \cite[Proposition 2.1]{BS24} can be proved as follows: In the discrete setting, let $X$ and $Y$ denote  two random variables that take values $\{x_1,x_2,...x_n\}$ and $\{y_1,y_2,...y_n\}$. As noted by \cite[Section 2.3]{he2016risk}, we have that $Y \preceq_{cx} X$ if and only if $Y$ is in the convex hull of $n!$ permutations of $\{x_1,x_2,...x_n\}$ (1.A.3 in \cite{marshall2011inequalities}), which can be interpreted as a description of all random variables with the cdf of $X$.

The assumption that $\xi$ is continuously distributed that we use throughout the paper, needs to be replaced by the assumption that $\xi$ takes $n$ distinct equiprobable values (in a discrete probability space with $n$ states), as in general the assumption that a distribution is continuous coincides in the finite case with the assumption that it is strictly increasing, both expressed by the fact that the random variable takes $n$ distinct values. In this case, the cost-efficient payoff solution to 
\begin{equation}
	\min_{\substack{Z \in \mathcal{X} \\ Z \in D(F)}}\E\bigl[ \xi Z\bigr]\label{CE}
\end{equation} in a discrete complete market setting can be written as
\begin{equation}
	Z^*:=F^{-1}(1-U_{\xi}),\label{form}
\end{equation}
where $U_{\xi}=\bigl(F_{\xi}(\xi)+F^-_{\xi}(\xi)\bigr)/2$ (where $F^-_{\xi}$ denotes the left limit of the cdf of $F_{\xi}$ of $\xi$). In particular, $1-U_{\xi}$ and $U_{\xi}$ have the same distribution, which we refer to as the uniform distribution. The theory for cost-efficiency in such a market was originally  studied by \cite{D88a,D88b} without providing an explicit representation, only stating that the cost-efficient payoff outcomes need to be oppositely ordered to the values taken by the state prices, which is exactly  what \eqref{form} ensures. 

In particular, the result \cite[Proposition 4.4]{BS24} holds in a market setting with $n$ equiprobable states. The proof follows similarly by taking $U$ uniform over the $n$ states, taking values $(i-0.5)/n$ for $i=1,...,n$ for instance. If $\xi^*$ does not take $n$ distinct values then one can use the randomization (e.g., using for instance a symmetric Bernoulli distribution when $\xi^*$ takes two equal  values). The proof of \cite[Proposition 4.8]{BS24} is illustrated in Appendix~\ref{proofprop4.6} in the context of the 3-state model introduced hereafter.

We recall that to study cost-efficiency as a distributional optimization problem in incomplete markets in full generality, we have to widen the set  of random variables to those being convex combinations of random variables with distribution $F$ (i.e., to $\conv(F)$). We then consider the following cost-efficiency problem
\begin{equation}\label{cvxminimaxA}
	\inf_{Z \in \convc(F)} c(Z)=\inf_{Z \in \convc(F)} \sup_{\xi \in \Xi} \E\bigl[\xi Z\bigr].
\end{equation}
We call a pair $(Z^*, \xi^*)$ a solution to problem \eqref{cvxminimaxA} if $(Z^*, \xi^*) \in \mathcal{Z} \times \mathcal{Y}$ with
\begin{equation}
	\mathcal{Z}  := \argmin_{\substack{Z \in \mathcal{X} \\ Z \in\convc(F)   }} \, \sup_{\substack{\xi \in \Xi}} \E\bigl[ \xi Z\bigr], \quad
	\mathcal{Y}  := \bigcup_{Z \in \mathcal{Z}}\argmax_{\substack{\zeta \in \Xibar}} \E\bigl[ \zeta Z\bigr].\label{7b}
\end{equation}
In a slight abuse of language we also often call  just $Z^*$ a solution to the problem \eqref{cvxminimaxA}.

\section{A 3-states Market Model \label{PCED3s}}
 
In the following we provide an example of a market setting in which the subset of "perfectly cost-efficient" distributions can be derived explicitly. We present here the major ideas and results, all detailed calculations for this example are provided in Appendix~\ref{AppD1bis}. We consider  a discrete, equiprobable probability space $\Omega = \{\omega_1, \omega_2, \omega_3\}$ with the family of pricing kernels $(\xi^u)_{u \in (0,1/3)}$, $\xi^u = \bigl(3u, 3-9u,6u\bigr)$ (corresponding to the asset dynamics $S_0= 2$, $S_T = (4,2,1)$). Here we use the vector shorthand $(a,b,c)$ to write $a\ind_{\{\omega_1\}} + b\ind_{\{\omega_2\}} + c\ind_{\{\omega_3\}},$ which allows us to use vector calculus. Let the distribution with cost to be minimized be given by the cdf
\[
F(t) = \frac{1}{3} \Bigl(\ind_{[x,\infty)}(t) + \ind_{[y,\infty)}(t) + \ind_{[z,\infty)}(t)\Bigr)\ \  \text{for some}\ \  x<y<z.
\]
The random variables corresponding to this distribution are the six permutations
\[
	(x,y,z); \, (x,z,y); \, (y,x,z); \, (y,z,x); \, (z,x,y); \, (z,y,x)
\]
while their convex closure is
\[
	\conv(F) = (a,b,6-a-b), \quad x \leq a,b \leq z, \, x+y \leq a+b \leq y+z.
\]

For the maximin problem we have by anti-comonotonicity with $\xi^u=\bigl(3u, 3-9u,6u\bigr)$ for $u\in(0,1)$ that three of the six possible r.v. with cdf $F$ can be discarded (as they do not satisfy that the first coordinate is larger than the last one). We find that the optimal value for the maximin problem is given by
\begin{align*}
	\max_{\xi^u} \min_{ Z \in D(F)} \E[\xi^u Z] 
	&= \left\{\begin{array}{cl}\frac{2x+2y+z}{5}&\text{if}\ 2x-3y+z\le0,\\
	\\
	\frac{2x+y+z}{4}&\text{if}\ 2x-3y+z\ge 0.\\
	\end{array}   \right.
\end{align*}
Furthermore, we have three cases for the optimizer that are reported in the first row of Table~\ref{tab:optimizer}. In all cases, there are no unique solutions. For the minimax problem we solve it by working out all cases explicitly. After some simplifications, we find
that
\begin{align*}
	\min_{ Z \in D(F)}\max_{\xi^u} \E[\xi^u Z] 
	&= \left\{\begin{array}{cl}y&\text{if}\ 2x-3y+z\le0,\\
	\\
	\frac{2x+z}{3}&\text{if}\ 2x-3y+z\ge 0.\\
	\end{array}   \right.
\end{align*}

We observe that the optimal solution to the minimax problem can have a different value than that of the maximin problem. Specifically, if $2x-3y+z\neq0$ then 
\[
	\max_{\xi^u} \min_{ Z \in D(F)} \E[\xi^u Z]<\min_{ Z \in D(F)}\max_{\xi^u} \E[\xi^u Z].
\]
Notice from the fourth row of Table ~\ref{tab:optimizer} that the $Z$ component of the optimal solution can also be part of an optimal solution to the maximin problem, but the corresponding pricing kernels are different.

We then consider the convexified version of the minimax problem. We find that
\begin{align*}
	\min_{\conv(F)} \max_{\xi^u} \E[\xi^u Z] & = \left\{ \begin{array}{cl} \frac{2x+2y+z}{5}&\text{if}\ z-3y+2x\le 0,\\
	\frac{2x+y+z}{4}& \text{if}\ z-3y+2x\ge 0.\end{array}\right.
\end{align*}
There are two cases. If $z-3y+2x\le 0,$ the set of optimizers is 
\[
\mathcal{Z} \times \mathcal{Y} = \biggl(\Bigl(z,\frac{2x+2y+z}{5},\frac{3x+3y-z}{5}\Bigr), (3u,3-9u,6u)\biggr)\quad \text{for}\quad u\in[0,1/3].
\] 
If $z-3y+2x\ge 0,$ the set of optimizers is 
\[
\mathcal{Z} \times \mathcal{Y} = \biggl(\Bigl(\frac{3z+3y-2x}{4},\frac{2x+y+z}{4},x\Bigr),(3u,3-9u,6u)\biggr)\quad \text{for} \quad u\in[0,1/3].
\]
We note that in both cases, the random variable $Z^*$ is unique, but the pricing kernel is not. In fact all pricing kernels are optimal, but the optimal random variable does not have the distribution $F$ in general. $Z^*$ is thus cost-efficient but not always perfectly cost-efficient.

Using the expression of the optima, we can easily identify the set of perfectly cost-efficient payoffs. To be perfectly cost-efficient, they must have the distribution $F$ and thus take the 3 values $x$, $y$ and $z$. If $z-3y+2x\le 0,$ observe that since $x<y<z$, we have 
\[
	z>\frac{2x+2y+z}{5}>\frac{3x+3y-z}{5}.
\]
To find the set of all perfectly cost-efficient distributions, we then solve $(2x+2y+z)/5=y$ and $(3x+3y-z)/5=x$. We find that this corresponds to 

\begin{equation}\label{perfCE}
	F(t) = \frac{1}{3} \Bigl(\ind_{[x,\infty)}(t) + \ind_{[y,\infty)}(t) + \ind_{[3y-2x,\infty)}(t)\Bigr)\ \  \text{for}\ \  x<y,
\end{equation}
which ensures that the optimizer has the distribution $F$ if and only if $3y-2x=z$. If $z-3y+2x\ge 0,$ observe that for $x<y<z$, we also have
\[
	\frac{3z+3y-2x}{4}>\frac{2x+y+z}{4}>x
\]
To find the set of all perfectly cost-efficient distributions, we then solve $(3z+3y-2x)/4=z$ and $(2x+y+z)/4=y$. We find that it corresponds to \eqref{perfCE}. Thus in both cases, the set of perfectly cost-efficient distributions reduces to \eqref{perfCE}.

\begin{remark}
	Note that this set of (perfectly) cost-efficient distributions is in bijection with $\R^2$ while the original set of all possible distributions over 3-states was in bijection with $\R^3$. This does not come as surprise if one takes the point of view of hedging: All attainable portfolios can be written as $\theta S_T + (x_0 - S_0\theta)$, characterized by the two parameter family $(x_0, \theta)$ describing initial capital and number of shares invested. We see that, for arbitrary $x_0$, the cost-efficient portfolios correspond exactly to those with a long stock position,  $\theta \geq 0$, illustrating the equivalence $(ii) \Leftrightarrow (v)$ in  \cite[Theorem 5.3]{BS24}.
\end{remark}

Finally, for the convexified version of the maximin problem $\max_{\xi^u} \min_{\conv(F)} \E[\xi^u Z]$, there are also  three cases depending on the sign of $2x-3y+z$. A comparative analysis of the optima and optimizers of the four problems is displayed in Tables~\ref{tab:optimum} and~\ref{tab:optimizer}.

This example illustrates  the equality and inequality in \eqref{CEFF} in  \cite[Proposition~4.2]{BS24},
\begin{equation}\label{CEFF}
		\sup_{\xi \in \Xi} \, \inf_{\substack{Z \in \mathcal{X} \\ Z \in D(F)}} \E\bigl[ \xi Z\bigr] =  \sup_{\xi \in \Xi} \, \inf_{Z \in \convc(F)} \E\bigl[ \xi Z\bigr] =  \inf_{Z \in \convc(F)} \, \sup_{\xi \in \Xi} \E\bigl[ \xi Z\bigr] \leq \inf_{\substack{Z \in \mathcal{X} \\ Z \in D(F)}} \, \sup_{\xi \in \Xi}\E\bigl[ \xi Z\bigr],
	\end{equation}
	and sheds some light on the subtle relationship between the minimax problem (in \eqref{dist0}) and the maximin problem (in  \eqref{dist-problem}) (as well as their convexified versions) that we recall here in their general form for convenience: The minimax problem is
\begin{equation} \label{dist0}
	\inf_{\substack{Z \in \mathcal{X} \\ Z \in D(F)}} c(Z)=
	\inf_{\substack{Z \in \mathcal{X} \\ Z \in D(F)}} \sup_{\xi \in \Xi} \E\bigl[ \xi Z\bigr],
\end{equation}
where $c(Z)$ denotes the superhedging price \eqref{SHP}, i.e., we associate to every claim $X \in \mathcal{X}$ its cost $c(X)$, the amount of money needed to superreplicate it in all possible states of the world, that satisfies (see, e.g., \cite[Section 3]{K96})
\begin{equation}\label{SHP}
    c(X) = \sup_{\xi \in \Xi} \E[\xi X].
\end{equation}
The maximin problem writes as
\begin{equation}\label{dist-problem}
	\sup_{\xi \in \Xi} \, \inf_{\substack{Z \in \mathcal{X} \\ Z \in D(F)}} \E\bigl[ \xi Z\bigr].
\end{equation}

Overall we realize that here as proved in the continuous setting, the values of the maximin problem, convexified maximin problem and convexified minimax problem agree, the value of the minimax problem might be strictly larger (see Table~\ref{tab:optimum} for an overview of the optimal values). Furthermore, the solution of the problems are not unique here, similarly to what was observed in the continuous model. Even in the perfectly cost-efficient case each optimization problem has idiosyncratic solutions beyond the shared family $\bigl((z,y,x),\xi^u\bigr), \, u \in [1/5,1/4]$.

\begin{table}[!htbp]
	\centering
	{\small
		\begin{tabular}{r|c|c|c}
			&$2x-3y+z>0$&$2x-3y+z=0$&$2x-3y+z<0$\\\hline\hline
			&&&\\
			$\displaystyle \max_{\xi^u} \min_{ Z \in D(F)} \E[\xi^u Z]=$&
			$\frac{2x+y+z}{4}$ &
			$\frac{2x+y+z}{4}= y = \frac{2x+2y+z}{5}$&
			$\frac{2x+2y+z}{5}$\\
			&&&\\\hline
			&&&\\
			$\displaystyle \max_{\xi^u} \min_{ Z \in \conv{(F)}}  \E[\xi^u Z]=$&
			$\frac{2x+y+z}{4}$ &
			$\frac{2x+y+z}{4}= y = \frac{2x+2y+z}{5}$&
			$\frac{2x+2y+z}{5}$\\
			&&&\\\hline
			&&&\\
			$\displaystyle \min_{ Z \in \conv{(F)}} \max_{\xi^u} \E[\xi^u Z]=$&
			$\frac{2x+y+z}{4}$ &
			$\frac{2x+y+z}{4}= y = \frac{2x+2y+z}{5}$&
			$\frac{2x+2y+z}{5}$\\
			&&&\\\hline
			&&&\\
			$\displaystyle \min_{ Z \in D(F)} \max_{\xi^u} \E[\xi^u Z]=$&$\frac{2x+z}{3}$&
			$\frac{2x+z}{3}=y$&
			$y$\\
			&&&\\
		\end{tabular} }
	\caption{Overview of the optimal values of the different optimization problems. Note that the first three problems have always the same solutions, whereas the last (minimax) problem has this solution only in the perfectly cost-efficient case (middle column), being larger otherwise.}
\label{tab:optimum}
\end{table}

\begin{remark}\label{attainable}
	In the case when the distribution $(x,y,z) \in D(F), \, x < y < z$ is perfectly cost-efficient, i.e., $z= 3y-2x$, the only possible payoff that is attainable that writes as $F^{-1}(1-U_{\xi^u})$ or its randomization (where $U_{\xi^u}$ is defined in \eqref{form}) is $(z,y,x)$ for $u \in[1/5,1/4]$ (as  it would imply that for the pricing kernel $\xi^u$ we have $3u \leq 3-9u \leq 6u$, which is equivalent to $u\in [1/5,1/4]$). Details can be found in Appendix~\ref{proofattain}.
\end{remark}

\begin{table}[!tbp]
	\centering
	\scalebox{0.65}{
	{\small
		\begin{tabular}{r|c|c|c}
			&$2x-3y+z>0$&$2x-3y+z=0$&$2x-3y+z<0$\\\hline\hline
			&&&\\
			$\displaystyle \max_{\xi^u} \min_{ Z \in D(F)}$&
			\begin{tabular}{c}
				$\bigl((z,y,x), \xi^\frac{1}{4}\bigr)$\\
				\\
				or $\bigl((y,z,x), \xi^\frac{1}{4}\bigr)
				$\end{tabular}
			&
			\begin{tabular}{c}
				$
				\bigl((z,x,y), \xi^\frac{1}{5}\bigr)$\\
				\\
				or $\bigl((y,z,x), \xi^\frac{1}{4}\bigr)$\\
				\\
				or $\left\{
				\bigl((z,y,x), \xi^u\bigr)\right\}_{u\in [1/5,1/4]}
				$\end{tabular}
			&
			\begin{tabular}{c}
				$\bigl((z,x,y), \xi^\frac{1}{5}\bigr)$\\
				\\
				or $\bigl((z,y,x), \xi^\frac{1}{5}\bigr)
				$\end{tabular}\\
			&&&\\\hline
			&&&\\
			$\displaystyle \max_{\xi^u} \min_{ Z \in \conv{(F)}} $&
			$\left\{(z-t,y+t,x), \xi^\frac{1}{4}\right\}_{t \in [0, z-y]}$ &
			$\begin{array}{c} \left\{\bigl((z,y,x), \xi^u\bigr)\right\}_{u \in (\frac{1}{5},\frac{1}{4})} \\ \text{or } \Bigl\{\bigl((z-t,y+t,x), \xi^\frac{1}{4}\bigr)\Bigr\}_{t \in [0,z-y]}\\ \text{or } \Bigl\{\bigl((z,y-t,x+t), \xi^\frac{1}{5} \bigr)\Bigr\}_{t \in [0,y-x]}
			\end{array}$
			&
			$\left\{\left((z,y-t,x+t), \xi^\frac{1}{5}\right)\right\}_{t \in [0, y-x]}$\\
			&&&\\\hline
			&&&\\
			$\displaystyle \min_{ Z \in \conv{(F)}} \max_{\xi^u} $&
			$\left\{\Bigl(\bigl(\frac{-2x+3y+3z}{4},\frac{2x+y+z}{4},x\bigr),\xi^{u}\Bigr)\right\}_{u\in[0,\frac{1}{3}]}$ &
			$\left\{\Bigl(\bigl(z,y,x\bigr),\xi^{u}\Bigr)\right\}_{u\in[0,\frac{1}{3}]}$&
			$\left\{\Bigl(\bigl(z,\frac{2x+2y+z}{5},\frac{3x+3y-z}{5}\bigr),\xi^{u}\Bigr)\right\}_{u\in[0,\frac{1}{3}]}$\\
			&&&\\\hline
			&&&\\
			$\displaystyle \min_{ Z \in D(F)} \max_{\xi^u} $&
			$\Bigl((z,y,x),\xi^{\frac{1}{3}}\Bigr)$
			&
			$\Bigl\{\Bigl((z,y,x),\xi^{u}\Bigr)\Bigr\}_{u\in[0,1/3]}$
			&
			$
			\begin{array}{cc}
			\Bigl((z,y,x),\xi^{0}\Bigr)&\hbox{if}\ x-3y+2z>0\\
			\left\{\begin{array}{l}\Bigl\{\Bigl((x,y,z),\xi^{u}\Bigr)\Bigr\}_{u\in[0,1/3]} \\ \Bigl((z,y,x),\xi^{0}\Bigr)\end{array}\right\}&\hbox{if}\ x-3y+2z=0\\
			\left\{\begin{array}{l}\Bigl((x,y,z),\xi^{0}\Bigr) \\ \Bigl((z,y,x),\xi^{0}\Bigr)\end{array}\right\}&\hbox{if}\ x-3y+2z<0\\
			\end{array}$\\
			&&&\\
	\end{tabular} }}
	\caption{Overview of the optimizers $(Z^*,\xi^*)$ of the different optimization problems. Note that in the perfectly cost-efficient case (middle column), all problems share the optimizers $\bigl\{\bigl((z,y,x), \xi^u\bigr)\bigr\}_{u \in [\frac{1}{5},\frac{1}{4}]}$.}
	\label{tab:optimizer}
\end{table}

Let us give in the context of this example also an example of a payoff obtained from a preference optimization that fails to be perfectly cost-efficient, due to the fact that the preference functional lacks the diversification-loving property (while being law-invariant, increasing and upper semicontinuous):

\begin{example}
	Consider the expected utility maximization problem with utility function
	\[
		U(x) = \left\{ \begin{array}{ll} x^2 & \text {if } x \geq 0, \\-\infty & \text{if } x<0, \end{array} \right.
	\]
	and initial capital $x_0=1$ in the 3-states model. Investing in $\theta$ shares of the risky asset yields
	\[
	\sup_{X \in \mathcal{X}(1)} \E\bigl[U(X)\bigr] = \max_{\theta \in [-1,\frac{1}{2}]} \frac{1}{3}U\Bigl((4,2,1)\theta +(1-2\theta)\Bigr) = \frac{14}{15}
	\]
reached at the payoff $\bigl(\frac{3}{5},1,\frac{6}{5}\bigr)$ (for $\theta = -\frac{1}{5}$). One easily checks that this payoff is not perfectly cost-efficient (and thus not cost-efficient at all, cf.   \cite[Remark~3.6]{BS24}) as the cost minimization problem for the cdf
\[
F(t) = \frac{1}{3}\Bigl(\ind_{[\frac{3}{5}, \infty)}(t) + \frac{1}{3}\ind_{[1, \infty)}(t) + \frac{1}{3}\ind_{[\frac{6}{5}, \infty)}(t)\Bigr)
\]
yields, using Table~\ref{tab:optimizer} the optimizer $\bigl(\frac{6}{5},\frac{22}{25},\frac{18}{25}\bigr) \prec_{cx} \bigl(\frac{3}{5},1,\frac{6}{5}\bigr)$.
\end{example}

\section{Expected Utility Maximization in the 3-states Model \label{ExampEU}}

Let $u$ be an increasing concave utility, then the maximum expected utility maximization problem
\begin{equation*}
	\sup_{X \in \mathcal{D}_{x_0}} \E[u(X)]
\end{equation*}
has a unique solution $(3x_0-2x^*,x_0,x^*)$, where $x^*$ solves 
\begin{equation}\label{AA}
	\max_{x\le x_0} \Bigl(u(3x_0-2x)+u(x)\Bigr).
\end{equation}
Indeed, we have that $ \E[u(X)]=u(x_1)/3+u(x_2)/3+u(x_3)/3.$ Furthermore, as the optimum is perfectly cost-efficient, it is of the form $x_1=3y-2x$, $x_2=y$ and $x_3=x$ (\cite[Theorem~5.3]{BS24} and Remark~\ref{attainable}). The constraint on the budget $c(X)\le x_0$ simplifies to $y={x_0}.$ Then, the maximum expected utility problem is simply \eqref{AA}. Furthermore, when $u$ is differentiable, the first-order conditions can also be written as
\[
	u^\prime(x^*)=2u^\prime\bigl(3x_0-2x^*\bigr).
\]
We also easily see that in particular $x^*<x_0$.

Our approach allows to recover the results obtained using duality by \cite{DS06} in the example of a  trinomial model. Specifically, 
in Example 3.3.4 of \cite[Chapter 3]{DS06}, we further assume that $m=1/3$, i.e., the three states labeled $\{n,g,b\}$ in the paper are then equiprobable. With their notation,
\[
	S_1=\left\{\begin{array}{l}
	S_0(1+\tilde u),\\
	S_0,\\
	S_0(1+\tilde d),\\
	\end{array}\right.
\]
where we choose $S_0=2$, $\tilde u=1$ and $\tilde d=-1/2$ to recover our 3-states example. We have that $1+\tilde u>1>1+\tilde d>0$ and $\tilde u>-\tilde d$. In this case, \cite{DS06} show that the optimal payoff maximizing the expected utility is given as 
\[
	X^*=\left\{\begin{array}{cl}
	x_0+\hat h \tilde u &\hbox{if}\ S_1=S_0(1+\tilde u),\\
	x_0 &\hbox{if}\ S_1=S_0,\\
	x_0+\hat h \tilde d &\hbox{if}\ S_1=S_0(1+\tilde d)\\
	\end{array}\right.
\]
and compute its expression for three choices of concave utility function.

Denote by $q:=-\tilde d/(\tilde u-\tilde d)=1/3.$ In Example 3.3.4 of \cite[Chapter 3]{DS06}, when $u(x)=\ln(x)$,
\[
	\hat h=x\frac{\tilde d+\tilde u}{-2\tilde d\tilde u},\quad 
	X^*=\left\{\begin{array}{cl}
	\frac{x_0}{2q}=\frac{3x_0}{2} &\hbox{if}\ S_1=S_0(1+\tilde u),\\
	x_0 &\hbox{if}\ S_1=S_0,\\
	\frac{x_0}{2(1-q)}=\frac{3x_0}{4} &\hbox{if}\ S_1=S_0(1+\tilde d),\\
	\end{array}\right.
\]
which is consistent with our result. It writes as $(3x_0-2x^*,x_0,x^*)$ where $x^*=3x_0/4$ solves \eqref{AA} when  $u(x)=\ln(x)$.

When $u(x)=-\exp(-x)$,
\[
	\hat h=\frac{1}{\tilde u-\tilde d}\ln\Bigl(\frac{\tilde u}{-\tilde d}\Bigr)=\frac{2}{3}\ln(2),\quad X^*=\left\{\begin{array}{cl}
	x_0+\hat h \tilde u &\hbox{if}\ S_1=S_0(1+\tilde u),\\
	x_0 &\hbox{if}\ S_1=S_0,\\
	x_0+\hat h \tilde d &\hbox{if}\ S_1=S_0(1+\tilde d),\\
	\end{array}\right.
\]
which is consistent with our result. It writes as $(3x_0-2x^*,x_0,x^*)$ where $x^*=x_0-\ln(2)/3$ solves \eqref{AA} when  $u(x)=-\exp(-x)$.

When $u(x)=x^\alpha/\alpha$, with $\alpha<1$ and $\alpha\neq0$. Define $\beta=\alpha/(\alpha-1)$ and $c_v=1/2\bigl((2q)^\beta+(2(1-q))^\beta\bigr)=1/3^\beta\bigl(2^{\beta-1}+2^{2\beta-1}\bigr)$, then
\[
	\hat h=\frac{x_{0}}{\tilde u}\biggl(\frac{(2q)^{\beta-1}}{c_v}-1\biggr)={x_{0}}\Bigl(\frac{3}{1+2^\beta}-1\Bigr),\quad X^*=\left\{\begin{array}{cl}
	x_0+\hat h \tilde u &\hbox{if}\ S_1=S_0(1+\tilde u),\\
	x_0 &\hbox{if}\ S_1=S_0,\\
	x_0+\hat h \tilde d &\hbox{if}\ S_1=S_0(1+\tilde d).\\
	\end{array}\right.
\]
We then find that
\[
	X^*=\biggl(\ \frac{3x_{0}}{1+2^\beta},\ x_{0},\ 3x_{0}\frac{2^{\beta-1}}{1+2^\beta}\ \biggr),
\]
which  is also consistent with our result as it writes $(3x_0-2x^*,x_0,x^*)$ where  the optimal solution to \eqref{AA} for the power utility function is given by
\[
	x^*=3x_{0}\frac{2^{\beta-1}}{1+2^\beta}.
\]

\section{Difference Between Minimax Problem \eqref{dist0} and Maximin Problem \eqref{dist-problem}}\label{sec:difference}

The last inequality in \cite[Proposition~4.2]{BS24} (recalled in  \eqref{CEFF}) can be strict as stated by the following proposition.

\begin{proposition}\label{p4.6}
  Minimax Problem \eqref{dist0} and Maximin Problem \eqref{dist-problem} do not share the same solution in general.
\end{proposition}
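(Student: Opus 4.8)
The plan is to prove the proposition by using the three-state model of Section~\ref{PCED3s} as an explicit counterexample, since all four optimization problems have already been solved there in closed form. The fact I would exploit is the strict value gap between the maximin and minimax problems whenever $2x-3y+z\neq 0$. Combining the two displayed formulas from Section~\ref{PCED3s}, in the case $2x-3y+z<0$ one has $\max_{\xi^u}\min_{Z\in D(F)}\E[\xi^u Z]=\frac{2x+2y+z}{5}$ while $\min_{Z\in D(F)}\max_{\xi^u}\E[\xi^u Z]=y$, and the hypothesis $x<y<z$ forces $\frac{2x+2y+z}{5}<y$. To make this concrete I would fix numbers, say $x=0$, $y=1$, $z=2$, for which $2x-3y+z=-1<0$; then the maximin value equals $\tfrac{4}{5}$ and the minimax value equals $1$, so the two optimal values are genuinely distinct.

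The logical core is then to argue that a shared solution would collapse this gap. Suppose toward a contradiction that the minimax problem \eqref{dist0} and the maximin problem \eqref{dist-problem} had a common solution, i.e.\ a pair $(Z^*,\xi^*)$ with $Z^*\in\mathcal{X}\cap D(F)$ optimal for \eqref{dist0}, with $\xi^*\in\Xi$ optimal for \eqref{dist-problem}, and forming a saddle point, so that $\E[\xi Z^*]\le\E[\xi^* Z^*]\le\E[\xi^* Z]$ for all admissible $\xi$ and $Z$. Taking the supremum over $\xi$ in the left inequality and the infimum over $Z$ in the right inequality (and using $\xi^*\in\Xi$, $Z^*\in D(F)$) yields
\[
\min_{Z\in D(F)}\max_{\xi^u}\E[\xi^u Z]=\max_{\xi^u}\E[\xi^u Z^*]=\E[\xi^* Z^*]=\min_{Z\in D(F)}\E[\xi^u Z]=\max_{\xi^u}\min_{Z\in D(F)}\E[\xi^u Z],
\]
which contradicts the strict inequality established above. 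Hence no shared solution exists in this instance, and the last inequality in \eqref{CEFF} is strict, proving the proposition.

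The main subtlety, rather than a deep obstacle, is to pin down precisely what ``sharing a solution'' should mean: a common optimizer of the two problems is exactly a saddle point of $(Z,\xi)\mapsto\E[\xi Z]$, and such a point necessarily equalizes the two optimal values by the chain displayed above. The other point requiring care is the choice of instance: I would deliberately select $(x,y,z)$ away from the perfectly cost-efficient hyperplane $2x-3y+z=0$, since on that hyperplane Table~\ref{tab:optimum} records that all four values collapse to $y$ and the two problems do share optimizers. All the quantitative work is already carried out in the closed-form computations of Section~\ref{PCED3s}, which I take as given.
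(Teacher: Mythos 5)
Your proposal is correct and follows essentially the same route as the paper: the paper's own proof invokes the 3-state counterexample (Example~\ref{coexad}, with $x=1$, $y=2$, $z=3$, maximin value $\tfrac{9}{5}$ versus minimax value $2$), exploiting exactly the strict value gap in the case $2x-3y+z<0$ that you use, with your saddle-point chain being a slightly more explicit version of the paper's implicit observation that distinct optimal values preclude a common optimizer. The only difference is that the paper additionally exhibits a second, continuous-market counterexample (Example~\ref{coexa}), which is not needed for the bare statement of the proposition.
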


\begin{proof}
	Examples~\ref{coexad} and~\ref{coexa} hereafter provide  counterexamples first in a discrete market and then in a continuous market and thus prove Proposition \ref{p4.6}. Note that  \cite[Theorem 5.3]{BS24} shows that this can only happen when the target distribution is not perfectly cost-efficient.   Therefore, Example~\ref{coexad} is  an example in a discrete setting in which the target distribution  is not perfectly cost-efficient. In Example~\ref{coexa}, we show that the distribution of the underlying stock price is not perfectly cost-efficient.
\end{proof}
	
\begin{example}\label{coexad}
	Using the example in Section~\ref{PCED3s}, for the cdf $F(t) = \bigl(\ind_{[1,\infty)}(t) + \ind_{[2,\infty)}(t) + \ind_{[3,\infty)}(t)\bigr)/3$ (i.e., $x=1$, $y=2$ and $z=3$).
	\begin{align*}
		\max_{\xi^u\in\Xibar} \min_{ Z \in D(F)} \E[\xi^u Z] & = \frac{9}{5}.
	\end{align*}
	The optimal solution is reached at $(Z_1^*, \xi_1^*) = \bigl((3,1,2), (\frac{3}{5}, \frac{6}{5}, \frac{6}{5})\bigr)$ and $(Z_2^*, \xi_2^*) =\bigl((3,2,1),\bigr.$ $\bigl. (\frac{3}{5}, \frac{6}{5}, \frac{6}{5})\bigr)$. For the minimax problem we get by doing all cases explicitly 
	\begin{align*}
		\min_{ Z \in D(F)} \max_{\xi^u} \E[\xi^u Z] &  = 2,
	\end{align*}
	and that the optimal solution is reached at $(\tilde{Z}, \tilde{\xi}) = \bigl((3,2,1), (0,3,0)\bigr)$. Obviously this gives a larger value than the maximin problem as $2 > 1.8$. Notice that the $Z$ component of the optimal solution to the minimax problem is also a part of an optimal solution to the maximin problem, but the corresponding pricing kernel is different as $\tilde{\xi}\in\Xibar$ is obtained for $u=0$ (which is not anymore a pricing kernel but a limit of pricing kernels).
\end{example}	

\begin{example}\label{coexa}
	Consider a Black--Scholes model with one risky asset that has constant drift $\mu$ and uncertain volatility $\sigma(X)$ that depends on the regime $X$. This volatility can be equal to $\sigma_H$ (when $X$ indicates a high volatility market) with probability $p\in(0,1)$, and $\sigma_L$ (when $X$ indicates a low volatility market) with probability $1-p$. We further assume that $X$ is independent of the Brownian motion $W$, and that there is a risk-free asset that pays no interest,
	\begin{align*}
		B_t & =1\quad\forall t \geq 0,\\
		dS_t & =\mu S_t dt +\sigma(X) S_t dW_t,
	\end{align*}
	where $S_0=s.$ The market price of risk $\theta_t$ at time $t$ is defined through
	\[
		\theta\sigma=\mu.
	\]
	As in 
	\cite[Lemma 4.4]{BS14}, we have that 
	\[
		\xi_T^q=\frac{q}{p}\mathcal{E}\biggl(-\int_0^{\cdot}\theta_H dW_t\biggr)_T \ind_{\{\sigma = \sigma_H\}}+\frac{1-q}{1-p}\mathcal{E}\biggl(-\int_0^{\cdot}\theta_L dW_t\biggr)_T\ind_{\{\sigma = \sigma_L\}}
	\]
	where $\theta_L:=\mu/\sigma_L$ and $\theta_H=\mu/\sigma_H$. The stock price costs $S_0$ at time $0$ and achieves a payoff $S_T$ at time $T$. Both the stock price $S_{T}$ and the pricing kernel $\xi_{T}^q$ have a mixture distribution, for which the cdfs and quantiles can be computed explicitly. We are able to conclude that the cost of the cheapest attainable payoff to achieve the distribution of final wealth $F_{S}$ is strictly larger than the left side of the inequality in \cite[Proposition~4.2]{BS24}. Thus solutions to problems \eqref{dist0} and \eqref{dist-problem} are not the same. Detailed calculations are given in Appendix~\ref{Appcoex}.
\end{example}

\section{Cost and Convex Order}

It is immediate from \cite[Proposition~4.2]{BS24} that in a complete market with continuous pricing kernel $\xi$, if a payoff $X$ is smaller in convex order than a cost-efficient payoff $X^*$, then it is more expensive, i.e., $c(X)\ge c(X^*)$. Because of the completeness of the market,  the inequality at the end of \eqref{CEFF} becomes an equality. If $F$ has finite mean, \eqref{CEFF}  is equivalent to
	\begin{equation}\label{CEFF2}
		\sup_{\xi \in \Xi} \, \inf_{\substack{Z \in \mathcal{X} \\ Z \in D(F)}} \E\bigl[ \xi Z\bigr] =  \sup_{\xi \in \Xi} \, \inf_{\substack{Z \in \mathcal{X} \\ F_{Z}\preceq_{cx} F}} \E\bigl[ \xi Z\bigr] =  \inf_{\substack{Z \in \mathcal{X} \\ F_{Z}\preceq_{cx} F}} \, c(Z)\leq \inf_{\substack{Z \in \mathcal{X} \\ Z \in D(F)}} \, c(Z).
	\end{equation}
 Let $X^*$ be cost-efficient with cdf $F$ (so $X^*=F^{-1}\bigl(1- \hat{F}_{\xi}(\xi,U)\bigr)$). Thus 
 \eqref{CEFF2} becomes  
\[
	\inf_{\substack{Z \in \mathcal{X} \\ Z\preceq_{cx} X^*}} \E\bigl[ \xi Z\bigr]= \inf_{\substack{Z \in \mathcal{X} \\ F_{Z}\preceq_{cx} F}} \,  \E\bigl[ \xi Z\bigr]=\inf_{\substack{Z \in \mathcal{X} \\ Z \in D(F)}} \E\bigl[ \xi Z\bigr]=\E\bigl[ \xi F^{-1}\bigl(1- \hat{F}_{\xi}(\xi,U)\bigr)\bigr],
\]
and any  payoff $Z$ convex smaller than $X^*$ has obviously  its price larger than the infimum price. This observation slightly extends the result in \cite[Lemma 2]{bernard2017optimal}, which states  that in a complete market,  the cost-efficient payoff for a claim with a distribution smaller in convex order is more expensive than the cost-efficient payoff corresponding to a claim with distribution larger in convex order. In an incomplete market, the same is true and can be proved straightforwardly. 

\begin{corollary}\label{convex_order}
	Consider a market with $F^1$, $F^2$ be two distributions with support bounded from below, with identical finite expectations such that $F^1 \preceq_{cx} F^2$. Then the superhedging cost of $F^1$ is larger than or equal to that of $F^2$.
\end{corollary}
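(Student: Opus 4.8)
The plan is to read off the inequality directly from the chain \eqref{CEFF2}, which (under the finite-mean hypothesis) identifies the superhedging cost of a distribution $F$ with the value of the cost-efficiency problem in its convex-order relaxed form,
\[
	C(F) := \inf_{\substack{Z \in \mathcal{X} \\ F_Z \preceq_{cx} F}} c(Z) = \sup_{\xi \in \Xi} \, \inf_{\substack{Z \in \mathcal{X} \\ F_Z \preceq_{cx} F}} \E\bigl[\xi Z\bigr].
\]
Since both $F^1$ and $F^2$ have finite mean, this representation is available for each of them, so the task reduces to comparing $C(F^1)$ and $C(F^2)$ through their admissible sets.

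The key step is that convex order is transitive: if $F^1 \preceq_{cx} F^2$, then any $Z$ with $F_Z \preceq_{cx} F^1$ also satisfies $F_Z \preceq_{cx} F^2$, so that
\[
	\bigl\{Z \in \mathcal{X} : F_Z \preceq_{cx} F^1\bigr\} \subseteq \bigl\{Z \in \mathcal{X} : F_Z \preceq_{cx} F^2\bigr\}.
\]
This inclusion is the only place where the hypothesis enters (the equal-means requirement being automatic, since $F^1 \preceq_{cx} F^2$ together with finiteness of the means already forces the two means to coincide by testing against the convex functions $x$ and $-x$). Taking infima over the two nested feasible sets reverses the inclusion, and I would conclude immediately that
\[
	C(F^1) = \inf_{\substack{Z \in \mathcal{X} \\ F_Z \preceq_{cx} F^1}} c(Z) \geq \inf_{\substack{Z \in \mathcal{X} \\ F_Z \preceq_{cx} F^2}} c(Z) = C(F^2),
\]
which is precisely the asserted inequality.

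The argument is essentially bookkeeping once the correct notion of cost is in place, and this is where the only genuine subtlety lies: the naive cost $\inf_{Z \in D(F)} c(Z)$ over payoffs with \emph{exactly} distribution $F$ does not support a nestedness argument, since $D(F^1)$ and $D(F^2)$ are disjoint for distinct distributions. It is the passage to the convex-order-dominated sets in \eqref{CEFF2} that makes the feasible regions comparable, and this passage is exactly what requires the finite-mean hypothesis. I would therefore be careful to invoke \eqref{CEFF2} for both $F^1$ and $F^2$, using the assumption that the supports are bounded from below to ensure the costs $C(F^1)$ and $C(F^2)$ are finite and the infima well-defined. Beyond this I do not anticipate any real obstacle.
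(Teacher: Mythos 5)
Your proposal is correct and follows essentially the same route as the paper: the paper's proof likewise interprets the superhedging cost of a distribution via the convex-order-relaxed infimum $\inf_{F_Z \preceq_{cx} F} c(Z)$ (as in \eqref{CEFF2}) and then derives the inequality from transitivity of $\preceq_{cx}$, which nests the feasible sets and reverses under the infimum. Your additional observations (that equality of means is forced by $F^1 \preceq_{cx} F^2$, and that the sets $D(F^1)$, $D(F^2)$ themselves would not nest) are sound but not needed beyond what the paper records.
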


\begin{proof}
	Indeed, by transitivity of the convex order, for random variables $Z$ we have that $F_Z \preceq_{cx} F^{\prime} \preceq_{cx} F^{\prime\prime}$ implies $\{Z \in \mathcal{X} \, : \, F_Z \preceq_{cx} F^1\} \subseteq \{Z \in \mathcal{X} \, : \, F_Z \preceq_{cx} F^2\}$ and therefore
	\begin{equation}\label{pricecomp}
		\inf_{\substack{Z \in \mathcal{X} \\ F_{Z}\preceq_{cx} F^2}} \, c(Z) \leq \inf_{\substack{Z \in \mathcal{X} \\ F_{Z}\preceq_{cx} F^1}} \, c(Z).
	\end{equation}
\end{proof}

\begin{remark}
	Note that while the order of the superhedging prices follows quite naturally from the setting, it is (contrary to complete markets) no longer true that the actual optimizing payoffs are ordered in convex order. While, denoting the cost-minimizers of $F^1$ and $F^2$ with $Z^{*,1}$ and $Z^{*,2}$ respectively obviously $Z^{*,2}$ cannot be larger than $Z^{*,1}$ in convex order, it does not have to be smaller (or equal). To show this, we rely another time on the 3-states model. Let
	\begin{align*}
		F^1(t) &= \frac{1}{3} \Bigl(\ind_{[1,\infty)}(t) + \ind_{[2,\infty)}(t) + \ind_{[4,\infty)}(t)\Bigr),\\
		F^2(t) &= \frac{1}{3} \Bigl(\ind_{[\frac{3}{2},\infty)}(t) + \ind_{[2,\infty)}(t) + \ind_{[\frac{7}{2},\infty)}(t)\Bigr).
	\end{align*}
Of course we have $F^1 \preceq_{cx} F^2$. $F^1$ is actually perfectly cost-efficient with optimizer $Z^{1,*} = (4,2,1)$ at cost $c\bigl(Z^{1,*}\bigr)=2$. However, $F^2$ is not, and the optimizer can derived by Table~\ref{tab:optimizer}, it is $Z^{2,*} = \bigl(\frac{27}{8},\frac{17}{8},\frac{3}{2}\bigr)$  at cost $c\bigl(Z^{2,*}\bigr)=1.9$. One easily checks that the two optimizers do not compare in convex order, as $\E\bigl[(2-Z^{1,*})^+\bigr] = 1/3 < 1/2 = \E\bigl[(2-Z^{2,*})^+\bigr]$ while $\E\bigl[(Z^{1,*}-2)^+\bigr] = 2/3 > 1/2 = \E\bigl[(Z^{2,*}-1)^+\bigr]$.
\end{remark}

To illustrate Corollary~\ref{convex_order},  we perform a comparison of the cost of the cost-efficient payoff with target LogNormal distribution and with target Normal distribution in an incomplete market. We use \cite[Proposition~4.4]{BS24} that solves problem \eqref{dist-problem}:
\begin{equation*}
    \sup_{\xi \in \Xi} \, \inf_{\substack{Z \in \mathcal{X} \\ Z \in D(F)}} \E\bigl[ \xi Z\bigr].
 \end{equation*}
Consider the stochastic volatility model used in Example~\ref{coexa}. Then the condition of \cite[Remark~4.5]{BS24} is satisfied and the optimal solution can be written of the form
\begin{equation}\label{solutionRepeated}
	Z^* = F^{-1}\bigl(1-F_{\xi^*} (\xi^*)\bigr).
\end{equation}
We then use a range of probability distributions for which we compute numerically the cheapest superhedging price.  We use the mean and variance
\[
m=S_0e^{\mu T}, \qquad V_{\sigma,p}:=\Bigl(pe^{\sigma_{H}^2T}+(1-p)e^{\sigma_{L}^2T}-1\Bigr)S_{0}^2e^{2\mu T},
\]
of $S_T$ to compute the corresponding parameters of a LogNormal distribution and a Normal distribution with equal mean and equal variance.

For the LogNormal distribution LN$(M,s^2)$, we have that 
\[
s=\sqrt{\ln\bigl(1+\frac{V_{\sigma,p}}{S_{0}^2e^{2\mu T}}\bigr)}, \qquad M=\ln(S_{0}e^{\mu T})-\frac{s^2}{2},
\]
and for the Normal distribution N($m,\sigma^2)$,  $m=S_{0}e^{\mu T}$ and $\sigma^2=V_{\sigma,p}.$
We then represent the superhedging price for each level of variance for the Normal and LogNormal distribution in Figure~\ref{F2}. We find that the distributions larger in convex order are also cheaper.

\begin{figure}[!h]
	\centering
	\includegraphics[width=15cm,height=7cm]{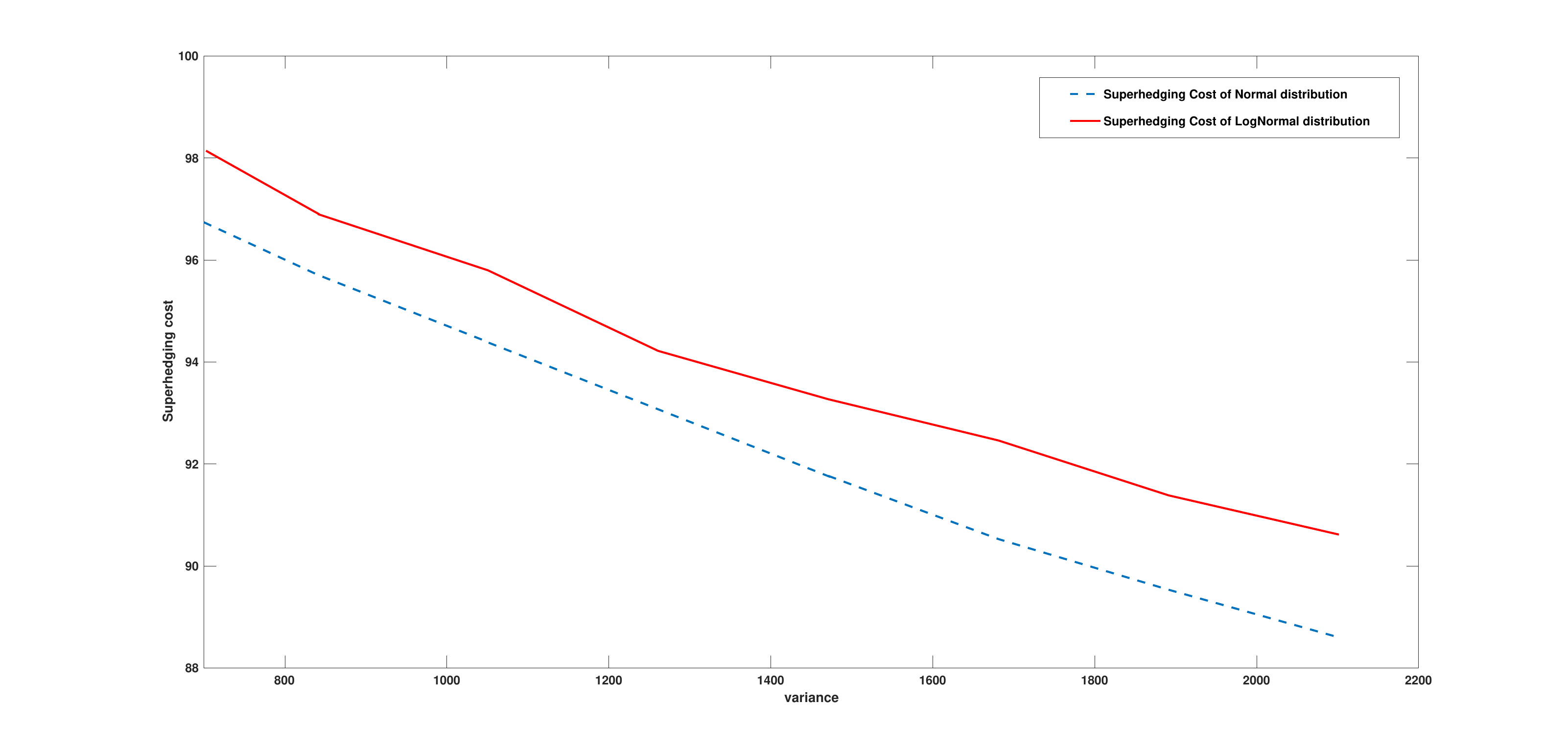} 
	\caption{Illustration of the fact that distributional superhedging costs decrease with convex order: Superhedging costs of normal distributions with fixed means as functions of the variance in the stochastic volatility model of Example~\ref{coexa}.}\label{F2}
\end{figure}
\newpage

\section*{Acknowledgments}
 
We would like to thank an associate editor for very detailed and constructive comments that helped us revise our paper. All remaining errors are ours. We thank Alfred M\"uller, Moris Strub and Ruodu Wang for their feedback on an earlier version of this paper. Also thanks to the Institute for Pure \& Applied Mathematics (IPAM), where at the Long Program on Broad Perspectives and New Directions in Financial Mathematics the seeds for this collaboration were laid. C. Bernard acknowledges funding from FWO G093024N at VUB.

\printbibliography

%

\newpage
\appendix

\begin{center}
{\LARGE \textbf{Appendix}}
\end{center}

\section{Detailed Calculation for a General 3-states Example \label{AppD1bis} }

Consider a discrete, equiprobable probability space $\Omega = \{\omega_1, \omega_2, \omega_3\}$ and the family of pricing kernels $(\xi^u)_{u \in (0,1/3)}$, $\xi^u = \bigl(3u, 3-9u,6u\bigr)$ (corresponding to the asset dynamics $S_0= 2$, $S_T = (4,2,1)$). Here we use the vector shorthand $(a,b,c)$ to write $a\ind_{\{\omega_1\}} + b\ind_{\{\omega_2\}} + c\ind_{\{\omega_3\}}$, which allows us to use vector calculus. Let the distribution with cost to be minimized given by the cdf $F(t) =  \bigl(\ind_{[x,\infty)}(t) + \ind_{[y,\infty)}(t) + \ind_{[z,\infty)}(t)\bigr)/3$ for some $x<y<z$. The random variables corresponding to this distribution are the six permutations
\[
(x,y,z); \, (x,z,y); \, (y,x,z); \, (y,z,x); \, (z,x,y); \, (z,y,x)
\]
while their convex closure is
\[
\conv(F) = (a,b,6-a-b), \quad x \leq a,b \leq z, \, x+y \leq a+b \leq y+z.
\]

\subsection{Maximin Problem \eqref{dist-problem}}

For the maximin problem \eqref{dist-problem}, we have by anti-comonotonicity with $\xi^u=\bigl(3u, 3-9u,6u\bigr)$ for $u\in(0,1)$ that three of the six possible r.v. with cdf $F$ can be discarded (as they do not satisfy that the first coordinate is larger than the last one),
\begin{align*}
	\max_{\xi^u} \min_{ Z \in D(F)} \E[\xi^u Z] & = \max_{u \in [0,\frac{1}{3}]} \min_{ Z \in D(F)} \frac{1}{3} \begin{pmatrix} 3u \\ 3-9u \\ 6u \end{pmatrix}Z \\ 
	& = \max_{u \in [0,\frac{1}{3}]} \min \left(\begin{pmatrix} u \\ 1-3u \\ 2u \end{pmatrix}\begin{pmatrix} z \\ x \\ y\end{pmatrix}, 
	\begin{pmatrix} u \\ 1-3u \\ 2u \end{pmatrix}\begin{pmatrix} z \\ y \\ x\end{pmatrix},
	\begin{pmatrix} u \\ 1-3u \\ 2u \end{pmatrix}\begin{pmatrix} y \\ z \\ x\end{pmatrix}
	\right)\\ 
	& = \max_{u \in [0,\frac{1}{3}]} \min \Bigl((z+2y-3x)u+x\ ,\ (z-3y+2x)u+y\ ,\ (y-3z+2x)u+z 
	\Bigr).
\end{align*}
We first study as a function of $u$, which of the three terms is minimum.
\begin{align*}
	\max_{\xi^u} \min_{ Z \in D(F)} \E[\xi^u Z] 
	&= \max \Biggl(\max_{u \in [0,\frac{1}{5}]} (z+2y-3x)u+x, \max_{u \in [\frac{1}{5}, \frac{1}{4}]}  (z-3y+2x)u+y,\Biggr. \\
	&\phantom{=====} \Biggl. \max_{u \in [\frac{1}{4},\frac{1}{3}]} (y-3z+2x)u+z \Biggr).
\end{align*}
We then observe that $z+2y-3x>0$ and $y+2x-3z<0$. Thus,
\begin{align*}
	\max_{\xi^u} \min_{ Z \in D(F)} \E[\xi^u Z] 
	&= \max \Biggl(\frac{z+2y-3x}{5}+x, \max_{u \in [\frac{1}{5}, \frac{1}{4}]}  (z-3y+2x)u+y, \frac{y-3z+2x}{4}+z \Biggr).
\end{align*}
After some simple further calculations, 
\begin{align*}
	\max_{\xi^u} \min_{ Z \in D(F)} \E[\xi^u Z] 
	&= \left\{\begin{array}{cl}\frac{z+2y+2x}{5}&\text{if}\ 2x-3y+z\leq 0,\\
	\\
	\frac{z+y+2x}{4}&\text{if}\ 2x-3y+z\geq 0.\\
	\end{array}   \right.
\end{align*}
The results are summarized in the first row of Table~\ref{tab:optimizer} in which we list the optimizers $(Z^*,\xi^*)$ in each of the cases.

\subsection{Minimax Problem \eqref{dist0}}

For the minimax problem we get by doing all cases explicitly 
\begin{align*}
	& \phantom{=:}\min_{ Z \in D(F)} \max_{\xi^u} \E[\xi^u Z] \\& = \min \Biggl(\max_{u \in [0,\frac{1}{3}]} \begin{pmatrix} u \\ 1-3u \\ 2u \end{pmatrix} \begin{pmatrix} x \\ y \\ z\end{pmatrix}, \max_{u \in [0,1/3]} \begin{pmatrix} u \\ 1-3u \\ 2u \end{pmatrix} \begin{pmatrix} x \\ z \\ y\end{pmatrix}, \max_{u \in [0,1/3]} \begin{pmatrix} u \\ 1-3u \\ 2u \end{pmatrix} \begin{pmatrix} y \\ x \\ z\end{pmatrix}, \Biggr.\\
	& \phantom{====} \Biggl. \max_{u \in [0,\frac{1}{3}]} \begin{pmatrix} u \\ 1-3u \\ 2u \end{pmatrix} \begin{pmatrix} y \\ z \\ x\end{pmatrix}, \max_{u \in [0,\frac{1}{3}]} \begin{pmatrix} u \\ 1-3u \\ 2u \end{pmatrix} \begin{pmatrix} z \\ x \\ y\end{pmatrix}, \max_{u \in [0,1/3]} \begin{pmatrix} u \\ 1-3u \\ 2u \end{pmatrix} \begin{pmatrix} z \\ y \\ x\end{pmatrix} \Biggr)\\
	&= \min \Biggl(\max_{u \in [0,\frac{1}{3}]} (x-3y+2z)u+y\, ,\  z\, ,\   \frac{y}{3}+\frac{2z}{3}\, ,\   z\, ,\  \frac{z}{3}+\frac{2y}{3}\, ,\ \max_{u \in [0,\frac{1}{3}]} (z-3y+2x)u+y\Biggr),
\end{align*}
using the fact that $x<y<z$. The optimal solution is then obtained by splitting into cases based on the respective signs of $x-3y+2z$ and $z-3y+2x$. Note that $2z+x>2x+z$ and thus if $x+2z\le 3y$ then $2x+z-3y<0$. The solutions are reported in the fourth row of Table~\ref{tab:optimizer}. 
 
Specifically, observe that the optimal solution to the minimax problem  can give a different value than the maximin problem. Notice that the $Z$ component of the optimal solution can also be part of an optimal solution to the maximin problem, but the corresponding pricing kernels are different.

\subsection{Convexified Minimax Problem \eqref{cvxminimaxA}}

For the convexified version of the minimax problem we have
\begin{align*}
	\min_{\conv(F)} \max_{\xi^u} \E[\xi^u Z] & = \min_{\substack{x \leq a,b \leq z \\ x+y \leq a+b \leq y+z}} \max_{u \in [0,\frac{1}{3}]} \begin{pmatrix} u \\ 1-3u \\ 2u \end{pmatrix} \begin{pmatrix} a \\ b \\ x+y+z-a-b\end{pmatrix} \\
	&= \min_{\substack{x \leq a,b \leq z \\ x+y \leq a+b \leq y+z}} \max_{u \in [0,\frac{1}{3}]} \Bigl( b + \bigl(2(x+y+z)-a-5b\bigr)u\Bigr) \\ &= \min \biggl(\min_{\substack{x \leq a,b \leq z \\ x+y \leq a+b \leq y+z \\ a+5b < 2(x+y+z)}} \Bigl(\frac{2}{3}(x+y+z) - \frac{a}{3} - \frac{2b}{3}\Bigr)\ ,\  \min_{\substack{x \leq a,b \leq z \\ x+y \leq a+b \leq y+z \\ a+5b \geq 2(x+y+z)}} b \biggr).
\end{align*}
We solve two linear programming problems. The solution to the first linear programming problem is a corner solution: the optimum is obtained for $(a^*,b^*)=(y,x)$ and the minimum value is $y/3+2z/3$. Thus
\begin{align*}
	\min_{\conv(F)} \max_{\xi^u} \E[\xi^u Z] & = \min \biggl(\frac{y}{3}+\frac{2z}{3}\ ,\  \min_{\substack{x \leq a,b \leq z \\ x+y \leq a+b \leq y+z \\ a+5b \geq 2(x+y+z)}} b\biggr) .
\end{align*}
The solution to  the second linear programming problem is also a corner solution. We need to solve for the intersection $(a_{1},b_{1})$ between $a+5b=2(x+y+z)$ and $a+b=y+z$. We find that $a_{1}=(3y+3z-2x)/4$ and $b_{1}=(2x+y+z)/4$. We observe that $a_{1}<z$ if and only if $z-3y+2x>0$. Thus if $z-3y+2x>0$ then  the solution to the minimization is attained at $(a_{1},b_{1})$. Otherwise, it is at $(z,\tilde{b}_{1}):=(z,(2x+2y+z)/5)$. The minimum is then equal to $b_{1}$ or $\tilde b_{1}$ and in both cases, it is less than $y/3+2z/3$. Thus
\begin{align*}
	\min_{\conv(F)} \max_{\xi^u} \E[\xi^u Z] & = \left\{ \begin{array}{ll} \frac{2x+2y+z}{5} &\hbox{if}\ z-3y+2x\le 0,\\
	\frac{2x+y+z}{4}\ &\hbox{if}\ z-3y+2x\ge 0.\end{array}\right.
\end{align*}

\subsection{Convexified Maximin Problem \label{convmaximinproof}}

Finally, for the convexified version of the maximin problem 
\begin{align*}
	\max_{\xi^u} \min_{\conv(F)} \E[\xi^u Z] & = \max_{u \in [0,\frac{1}{3}]} \min_{\substack{x \leq a,b \leq z \\ x+y \leq a+b \leq y+z}} \begin{pmatrix} u \\ 1-3u \\ 2u \end{pmatrix} \begin{pmatrix} a \\ b \\ x+y+z-a-b\end{pmatrix} \\
	&=  \max_{u \in [0,\frac{1}{3}]} \min_{\substack{x \leq a,b \leq z \\ x+y \leq a+b \leq y+z}} \Bigl( 2(x+y+z) u - ua + (1-5u)b \Bigr).
\end{align*}
We split into several cases to solve for the minimum for each possible value of $u$:
\begin{itemize}
	\item[(a)] If $u=0$, the minimum is equal to $x$ and attained at $(a,b)=(h,x)$ for $h\in[y,z]$.
	
	\item[(b)] If $u\in(0,1/5)$, the minimum is equal to $x+u(-3x+2y+z)$ and attained at $(a,b)=(z,x)$. Maximizing over $u$ gives 		$(2x+2y+z)/5$ achieved at $u = \frac{1}{5}$.
	
	\item[(c)] If $u=1/5$, the minimum is equal to $(2x+2y+z)/5$ and attained at $(a,b)=(z,h)$ for $h\in[x,y]$.
	
	\item[(d)] If $u\in(1/5,1/4)$, the minimum is equal  to $y+u(2x-3y+z)$ and attained at $(a,b)=(z,y)$. In the case that $2x-3y+z>0$, the maximum is $(2x+y+z)/4$ and is attained at $u = 1/4$. In the case that $2x-3y+z<0$, the maximum is $(2x+2y+z)/5$ and is attained at $u = 1/5$. Finally, if $2x-3y+z=0$, the maximum $y$ and is attained at all $u \in \bigl(1/5, 1/4\bigr)$.
	
	\item[(e)] If $u=1/4$, the minimum is equal to $(2x+y+z)/4$ and attained at $(a,b)=(\lambda y + (1-\lambda)z, \lambda z + (1-\lambda)y)$ for $\lambda\in[0,1]$.
	
	\item[(f)] If $u\in(1/4, 1/3)$, the minimum is equal to  $z+u(2x+y-3z)$ and attained at $(a,b)=(y,z)$. Maximizing over $u$ gives $(2x+y+z)/4$ achieved at $u = 1/4$.
\end{itemize}

Now, in the case that $2x+z-3y>0$ the optimum is achieved in case (e) yielding the solution set 
\[
	\biggl\{\biggl((z-t,y+t,x), \Bigl(\frac{3}{4},\frac{3}{4},\frac{3}{2}\Bigr)\biggr) \, : \, t \in [0, z-y]\biggr\}.
\]
If $2x+z-3y<0$ the optimum appears in case (c) and is 
\[
	\biggl\{\biggl((z,y-t,x+t), \Bigl(\frac{3}{5},\frac{3}{5},\frac{6}{5}\Bigr)\biggr) \, : \, t \in [0, y-x]\biggr\}.
\]
Finally, if $2x+z-3y=0$ the optimum is achieved in (c), (d) and (e) and given by 
\begin{align*}
	& \biggl\{\biggl((z-t,y+t,x), \Bigl(\frac{3}{4},\frac{3}{4},\frac{3}{2}\Bigr)\biggr) \, : \, t \in [0, z-y]\biggr\},  \qquad \biggl\{\bigl((z,y,x), \xi^u\bigr) \, : \, u \in \Bigl(\frac{1}{5},\frac{1}{4}\Bigr)\biggr\}, \\ 
	& \biggl\{\biggl((z,y-t,x+t), \Bigl(\frac{3}{5},\frac{3}{5},\frac{6}{5}\Bigr)\biggr) \, : \, t \in [0, y-x]\biggr\}.
\end{align*}

We give an illustration of the optimizers in Figure \ref{fig2} hereafter. 
In particular, we find that optimizers coincide only in the perfectly cost-efficient case, and even then some of the optimizers to each problem are part of this shared solution.

 A comparative analysis of the optima and optimizers of the four problems is displayed in Tables~\ref{tab:optimum} and~\ref{tab:optimizer}.

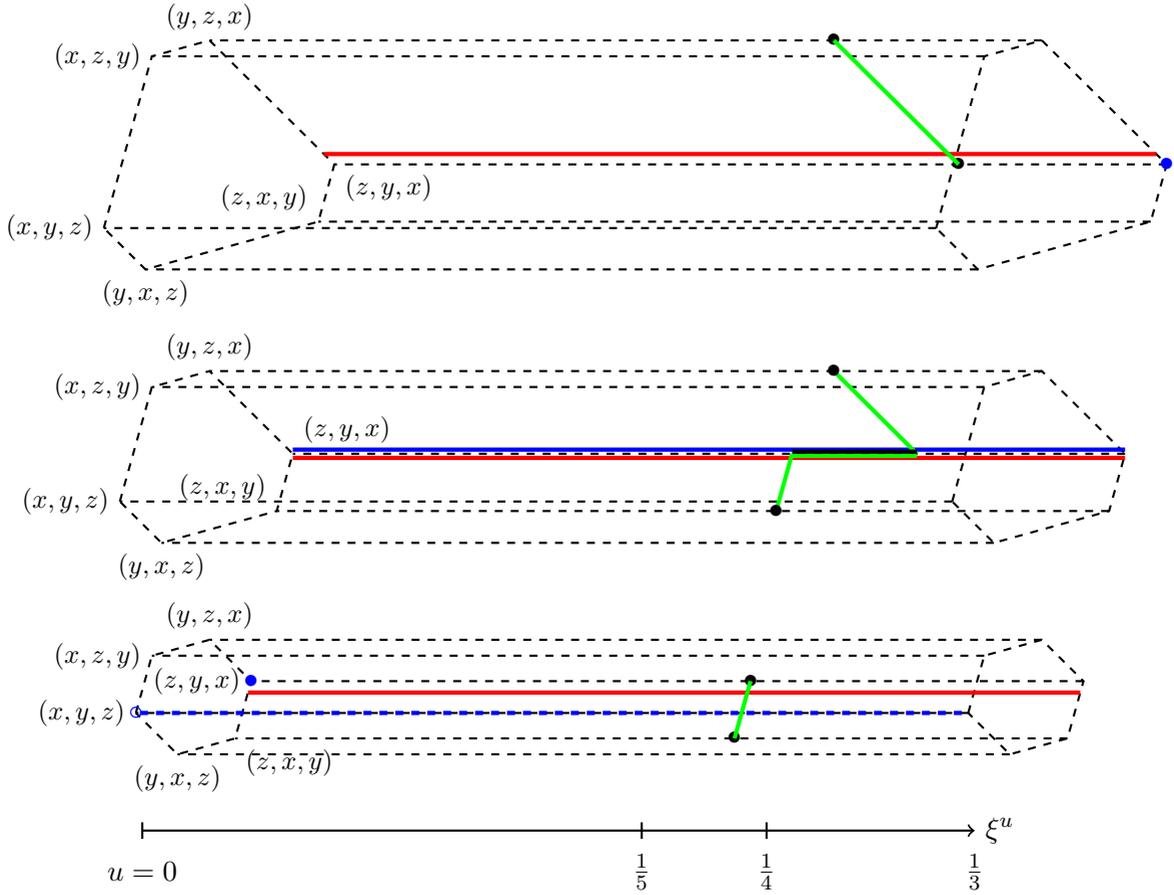
\begin{figure}[tbh!]
	\centering
		\begin{tikzpicture}[scale=0.55]
			\draw[thick,dashed] (1,2,4) -- (1,4,2) -- (2,4,1) --  (4,2,1)  -- (4,1,2) --(2,1,4) -- cycle;
			\draw[thick,dashed] (21,2,4) -- (21,4,2) -- (22,4,1) -- (24,2,1) -- (24,1,2) -- (22,1,4) -- cycle;
			\draw[thick,dashed] (21,2,4) -- (1,2,4) node[anchor= east]{\small $(x,y,z)$};
			\draw[thick,dashed] (22,1,4) -- (2,1,4) node[anchor= north]{\small $(y,x,z)$};
			\draw[thick,dashed] (22,4,1) -- (2,4,1) node[anchor= south]{\small $(y,z,x)$};
			\draw[thick,dashed] (24,2,1) -- (4,2,1) node[anchor= south west]{\small $(z,y,x)$};
			\draw[thick,dashed] (24,1,2) -- (4,1,2) node[anchor= south east]{\small $(z,x,y)$};
			\draw[thick,dashed] (21,4,2) -- (1,4,2) node[anchor= east]{\small $(x,z,y)$};
			\draw[ultra thick,red] (24,1.9,1) -- (4,1.9,1);
			\draw[ultra thick,blue] (24,2.1,1) -- (4,2.1,1);
			\draw[ultra thick,green] (16,1.95,1) -- (19,1.95,1);
			\draw[ultra thick,green] (16,2,1) -- (16,1,2);
			\draw[ultra thick,green] (19,2,1) -- (17,4,1);
			\draw[ultra thick,black] (16,2.05,1) -- (19,2.05,1) node at (17,4,1) {$\bullet$} node at (16,1,2) {$\bullet$};
			\begin{scope}[yshift=7cm]
				\draw[thick,dashed] (1,2,5) -- (1,5,2) -- (2,5,1) -- (5,2,1) -- (5,1,2) -- (2,1,5) -- cycle;
				\draw[thick,dashed] (21,2,5) -- (21,5,2) -- (22,5,1) -- (25,2,1) -- (25,1,2) -- (22,1,5) -- cycle;
				\draw[thick,dashed] (21,2,5) -- (1,2,5) node[anchor= east]{\small $(x,y,z)$};
				\draw[thick,dashed] (22,1,5) -- (2,1,5) node[anchor= north]{\small $(y,x,z)$};
				\draw[thick,dashed] (22,5,1) -- (2,5,1) node[anchor= south]{\small $(y,z,x)$};
				\draw[thick,dashed] (25,2,1) -- (5,2,1) node[anchor= north west]{\small $(z,y,x)$};
				\draw[thick,dashed] (25,1,2) -- (5,1,2) node[anchor= south east]{\small $(z,x,y)$};
				\draw[thick,dashed] (21,5,2) -- (1,5,2) node[anchor= east]{\small $(x,z,y)$};
				\node[blue] at (25,2,1) {$\bullet$};
				\draw[ultra thick,red] (24.75,2.25,1) -- (4.75,2.25,1);
				\node[ultra thick,black] at (20,2,1) {$\bullet$} node at (17,5,1) {$\bullet$};
				\draw[ultra thick,green] (17,5,1) -- (20,2,1);
			\end{scope}
			\begin{scope}[yshift=-5.5cm]
				\draw[thick,dashed] (1,2,3) -- (1,3,2) -- (2,3,1) -- (3,2,1) -- (3,1,2) -- (2,1,3) -- cycle;
				\draw[thick,dashed] (21,2,3) -- (21,3,2) -- (22,3,1) -- (23,2,1) -- (23,1,2) -- (22,1,3) -- cycle;
				\draw[thick,dashed] (21,2,3) -- (1,2,3) node[anchor= east]{\small $(x,y,z)$};
				\draw[thick,dashed] (22,1,3) -- (2,1,3) node[anchor= north]{\small $(y,x,z)$};
				\draw[thick,dashed] (22,3,1) -- (2,3,1) node[anchor= south]{\small $(y,z,x)$};
				\draw[thick,dashed] (23,2,1) -- (3,2,1) node[anchor= east]{\small $(z,y,x)$};
				\draw[thick,dashed] (23,1,2) -- (3,1,2) node[anchor= north west]{\small $(z,x,y)$};
				\draw[thick,dashed] (21,3,2) -- (1,3,2) node[anchor= east]{\small $(x,z,y)$};
				\node at (0,-3,0) {$u=0$};
				\node at (12,-3,0) {$\frac{1}{5}$};
				\node at (15,-3,0) {$\frac{1}{4}$};
				\node at (20,-3,0) {$\frac{1}{3}$};
				\draw[ultra thick,red] (23,1.8,1.2) -- (3,1.8,1.2);
				\draw[ultra thick,blue, dashed] (1.1,2,3) -- (21,2,3) node at (1,2,3) {$\mathbf{\circ}$} node at (3,2,1) {$\bullet$};
				\node[ultra thick,black] at (15,1,2) {$\bullet$} node at (15,2,1) {$\bullet$};
				\draw[ultra thick,green] (15,1,2) -- (15,2,1);
				\draw[thick,->] (0,-2,0) -- (20,-2,0) node[anchor= west]{$\xi^u$};
				\draw[thick] (0,-1.8,0) -- (0,-2.2,0);
				\draw[thick] (12,-1.8,0) -- (12,-2.2,0);
				\draw[thick] (15,-1.8,0) -- (15,-2.2,0);
			\end{scope}
		\end{tikzpicture}
	\caption{Illustration of the solutions to the 3-states example in the cases $2x-3y+z>0$ (top, $x=1,\, y=2, \, z=5$) $2x-3y+z=0$ (middle, the cost-efficient case, $x=1,\, y=2, \, z=4$) and $2x-3y+z<0$ (bottom, $x=1,\, y=2, \, z=3$). Convexified Minimax (red), minimax (blue), convexified maximin (green) and maximin (black) problems.}\label{fig2}
\end{figure}

\newpage
\subsection{Direct Proof of Remark~\ref{attainable}: Attainable Perfectly Cost-efficient Payoffs\label{proofattain}}
		
\begin{proof} 
	Consider a distribution $(x,y,z) \in D(F), \, x  < y  < z$. Let us characterize the set of attainable payoffs. Recall that if $(x_{1},x_{2},x_{3})$ be attainable, then $\forall u \in\bigl[0,\frac{1}{3}\bigr],\quad \E\Biggl[\xi^u \Biggl(\begin{array}{c}x_{1}\\x_{2}\\x_{3}\end{array}\Biggr)\Biggr]=cst.$ This is possible if and only if
	\[
		x_{1}-3x_{2}+2x_{3}=0.
	\]

	Using the fact that $x<y<z$, we can discard some of the cases.
	\begin{itemize}
		\item $(x,y,z)$ is attainable if and only if $x-3y+2z=0$.
		\item $(x,z,y)$ is attainable if and only if $x-3z+2y=0$. This is impossible as $x+2y<z$.
		\item $(y,x,z)$ is attainable if and only if $y-3x+2z=0$. This is impossible as $y+2z>3x$.
		\item $(y,z,x)$ is attainable if and only if $y-3z+2x=0$. This is impossible as $2x+y<3z$.
		\item $(z,y,x)$ is attainable if and only if $z-3y+2x=0$.
		\item $(z,x,y)$ is attainable if and only if $z-3x+2y=0$. This is impossible as $z+2y>3x$.
	\end{itemize}

Overall, the only possible payoffs that are attainable with distribution $\{x,y,z\}$ with probability $1/3$ are $(x,y,z)$ and $(z,y,x)$.

Then  if in addition, this payoff writes as $F^{-1}\bigl(1-U_{\xi^u})\bigr)$ (or a randomization of it) for some pricing kernel $\xi^u$ , then $(x,y,z)$ is impossible (as it would imply that  $3u \geq 3-9u \geq 6u$). Thus, the only possible payoff that writes as $F^{-1}\bigl(1-U_{\xi^u}\bigr)$ is $(z,y,x)$ for $u \in [1/5,1/4]$. Specifically $\xi^u$ takes 3 distinct values for $3u<3-9u<6u$, which is equivalent to $u\in(1/5,1/4)$, whereas in the two borderline cases with $u = \frac{1}{5}$ and $u = \frac{1}{4}$ the pricing kernel takes only two values, and the payoff is achieved through randomization.

We thus find that a payoff that is attainable and writes as $F^{-1}\bigl(1-U_{\xi^u})\bigr)$ if and only if it is  perfectly cost-efficient, i.e., $z= 3y-2x$. This is in line with the finding in \cite[Theorem~5.3]{BS24}.
\end{proof}

\subsection{On  \cite[Proposition~4.5]{BS24}
\label{proofprop4.6}}

Here we want to exemplify the idea of KKM argument in the proof of \cite[Proposition~4.5]{BS24} in the 3-state model. We note that
\[
	F^{-1}\bigl(1-\hat{F}_{\xi^u}(\xi^u; \hat{U}_{\xi^u})\bigr) = 
	\left\{\begin{array}{ll} (z,x,y) & \text{ if } u \in \bigl(0, \frac{1}{5}\bigr), \\
		(z,x,y) \text{ and } (z,y,x) \text{ with equal probability} & \text{ if } u =\frac{1}{5},\\
		(z,y,x) & \text{ if } u \in \bigl(\frac{1}{5}, \frac{1}{4}\bigr), \\
		(z,y,x) \text{ and } (y,z,x) \text{ with equal  probability} & \text{ if } u =\frac{1}{4}, \\
		(y, z,x) & \text{ if } u \in \bigl(\frac{1}{4}, \frac{1}{3}\bigr)
	\end{array}\right.
\]
and thus 
\[
	e(s,u) := \E\Bigr[\xi^sF^{-1}\bigl(1-\hat{F}_{\xi^u}(\xi^u; \hat{U}_{\xi^u})\bigr)\Bigr] = 
	\left\{\begin{array}{ll} x + (-3x + 2y +z)s & \text{ if } u \in \bigl(0, \frac{1}{5}\bigr), \\ 
		\frac{x+y}{2} + \bigl(-\frac{x+y}{2}+z\bigr)s  & \text{ if } u =\frac{1}{5},
		\\ y + (2x-3y+z)s & \text{ if } u \in \bigl(\frac{1}{5}, \frac{1}{4}\bigr), \\
		\frac{y+z}{2} +(2x-y-z)s & \text{ if } u =\frac{1}{4}, \\
		z+(2x+y-3z)s & \text{ if } u \in \bigl(\frac{1}{4}, \frac{1}{3}\bigr).
	\end{array}\right.
\]
From this we note that in the case $2x-3y+z >0$ we have
\[
	\mathcal{A}(\xi^s) = \mathcal{B}(\xi^s) = 
	\left\{\begin{array}{ll} \bigl[s,\frac{1}{4}\bigr] & \text{ if } s \in \bigl(0, \frac{1}{4}\bigr), \\
		\bigl\{\frac{1}{4}\bigr\} & \text{ if } s = \frac{1}{4}, \\
		\bigl[\frac{1}{4},s\bigr] & \text{ if } s \in \bigl(\frac{1}{4}, \frac{1}{3}\bigr),
	\end{array}\right.
\]
which is compact and closed in probability and thus
\[
	\bigcap_{s \in \bigl(0, \frac{1}{3}\bigr)} \mathcal{A}(\xi^s) =  \biggl\{\frac{1}{4}\biggr\}.
\]
Similarly in the case $2x-3y+z <0$ we have
\[
	\mathcal{A}(\xi^s) = \mathcal{B}(\xi^s) = \left\{\begin{array}{ll} \bigl[s,\frac{1}{5}\bigr] & \text{ if } s \in \bigl(0, \frac{1}{5}\bigr) \\ \bigl\{\frac{1}{5}\bigr\} & \text{ if } s = \frac{1}{5} \\\bigl[\frac{1}{5},s\bigr] & \text{ if } s \in \bigl(\frac{1}{5}, \frac{1}{3}\bigr) \end{array}\right.
\]
and thus
\[
	\bigcap_{s \in \bigl(0, \frac{1}{3}\bigr)} \mathcal{A}(\xi^s) =  \biggl\{\frac{1}{5}\biggr\} .
\]
Finally, if $2x-3y+z = 0$ we have
\[
	\mathcal{A}(\xi^s) = \mathcal{B}(\xi^s) =
	\left\{\begin{array}{ll} \bigl[s,\frac{1}{4}\bigr] & \text{ if } s \in \bigl(0, \frac{1}{5}\bigr),\\
		\bigl[\frac{1}{5},\frac{1}{4}\bigr] & \text{ if } s \in \bigl[\frac{1}{5},\frac{1}{4}\bigr], \\
		\bigl[\frac{1}{5},s\bigr] & \text{ if } s \in \bigl(\frac{1}{4}, \frac{1}{3}\bigr)
	\end{array}\right.
\]
and thus
\[
	\bigcap_{s \in \bigl(0, \frac{1}{3}\bigr)} \mathcal{A}(\xi^s) =  \biggl[\frac{1}{5},\frac{1}{4}\biggr].
\]
An illustration for this argument is given in Figure~\ref{fig:KKM}.

\begin{figure}[htb]
	\centering
		\begin{tikzpicture}[scale=2.2]
		\draw[gray!80, ->] (-0.5cm,0cm) -- (3.5cm,0cm) node at (3.55,-0.1) {$s$};  
		\draw[gray!80, ->] (0cm,-0.5cm) -- (0cm,3.4cm) node at (-0.3,3.4) {$e(s,u)$};  
		\draw[gray!60, dotted] (1.66cm,-0.5cm) -- (1.66cm,3.4cm) node at (1.66,-0.4) {$\frac{1}{6}$};  
		\draw[gray!60, dotted] (2cm,-0.5cm) -- (2cm,3.4cm) node at (2,-0.4) {$\frac{1}{5}$};  
		\draw[gray!60, dotted] (2.5cm,-0.5cm) -- (2.5cm,3.4cm) node at (2.5,-0.4) {$\frac{1}{4}$};  
		\draw[gray!60, dotted] (2.86cm,-0.5cm) -- (2.86cm,3.4cm) node at (2.86,-0.4) {$\frac{2}{7}$};  
		\draw[gray!60, dotted] (3.33cm,-0.5cm) -- (3.33cm,3.4cm) node at (3.33,-0.4) {$\frac{1}{3}$};  \
		\draw[blue!80, dashed] (0cm,1cm) -- (3.33cm,2.33cm) node at (3.8,2.33) {$u \in \bigl(0,\frac{1}{5}\bigr)$}; 
		\draw[olive!80, dotted] (0cm,1.5cm) -- (3.33cm,2cm) node at (3.8,2) {$u = \frac{1}{5}$}; 
		\draw[green!80, dashed] (0cm,2cm) -- (3.33cm,1.67cm) node at (3.8,1.67) {$u \in \bigl(\frac{1}{5},\frac{1}{4}\bigr)$}; 
		\draw[orange!80, dotted] (0cm,2.5cm) -- (3.33cm,1.5cm) node at (3.8,1.5) {$u = \frac{1}{4}$}; 
		\draw[yellow!80, dashed] (0cm,3cm) -- (3.33cm,1.33cm) node at (3.8,1.33) {$u \in \bigl(\frac{1}{4},\frac{1}{3}\bigr)$}; 
		\draw[red, thick] (0cm,1cm) -- (2cm, 1.8cm) -- (2.5cm,1.75cm) -- (3.33cm,1.33cm) node at (0.8,1.1) {$e(s,s)$}; 
		\draw[red, thick] (1.66cm,0cm) -- (2cm, 0cm) node at (1.8,-0.2) {$\mathcal{B}(\xi^\frac{1}{6})$}; 
		\draw[red, thick] (1.7cm,0.04cm) -- (1.66cm, 0.04cm) --  (1.66cm, -0.04cm) -- (1.7cm,- 0.04cm); 
		\draw[red, thick] (1.96cm,0.04cm) -- (1.995cm, 0.04cm) --  (1.995cm, -0.04cm) -- (1.96cm,- 0.04cm); 
		\draw[blue, thick] (2cm,0cm) -- (2.86cm, 0cm) node at (2.2,-0.2) {$\mathcal{B}(\xi^\frac{2}{7})$}; 
		\draw[blue, thick] (2.82cm,0.04cm) -- (2.86cm, 0.04cm) --  (2.86cm, -0.04cm) -- (2.82cm,- 0.04cm); 
		\draw[blue, thick] (2.04cm,0.04cm) -- (2.005cm, 0.04cm) --  (2.005cm, -0.04cm) -- (2.04cm,- 0.04cm); 
		\end{tikzpicture}
	\caption{Illustration of the KKM argument in  \cite[Proposition~4.5]{BS24} for the 3-state model with $x=1$, $y=2$ and $z=3$.}\label{fig:KKM}
\end{figure}
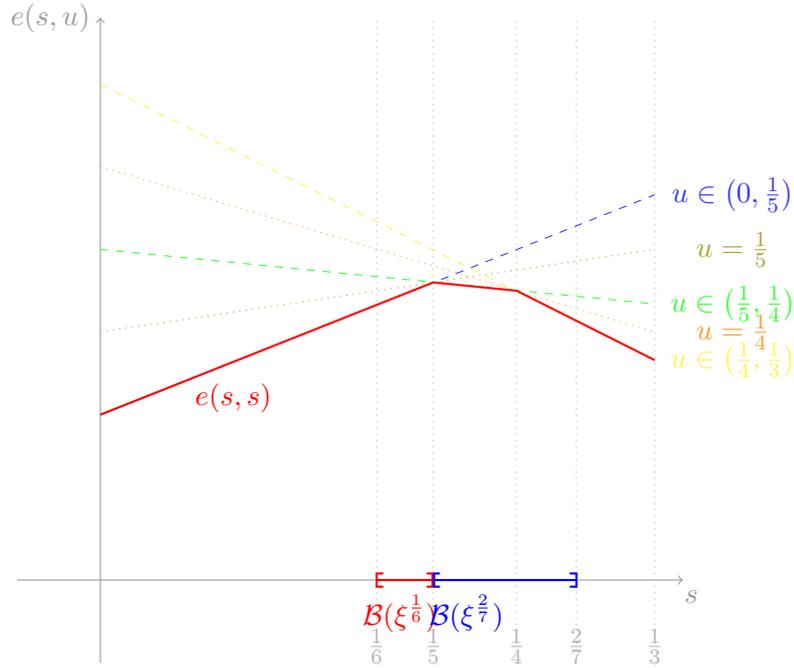

\section{Detailed calculations for Example~\ref{coexa}\label{Appcoex}}

We provide here the detailed calculations to prove that problems \eqref{dist0} and \eqref{dist-problem} do not share the same solutions in the market model presented in Example~\ref{coexa}. We have the following explicit expressions		

\begin{align*}
	F_{S_{T}}(x) & = p\Phi\Biggl(\frac{\ln(x/S_{0})-(\mu-0.5\sigma_{H}^2)T}{\sigma_{H}\sqrt{T}}\Biggr)+ (1-p)\Phi\Biggl(\frac{\ln(x/S_{0})-(\mu-0.5\sigma_{L}^2)T}{\sigma_{L}\sqrt{T}}\Biggr),\\
	F_{\xi_{T}^q}(x) & =p\Phi\Biggl(\frac{\ln(x)+rT-\ln(q/p)+\theta_{H}^2T/2}{\theta_{H}\sqrt{T}}\Biggr)\\
	&\phantom{====}+ (1-p)\Phi\Biggl(\frac{\ln(x)+rT-\ln((1-q)/(1-p))+\theta_{L}^2T/2}{\theta_{L}\sqrt{T}}\Biggr).
\end{align*}

From \cite{bernard2015quantile}, we have an explicit expression of the quantile function for $S_{T}$ and $\xi_{T}^q.$ Define $X_{H}$ and $X_{L}$ such that $S_{T}=X_{H}\ind_{\{\sigma=\sigma_{H}\}}+X_{L}\ind_{\{\sigma=\sigma_{L}\}}.$ Let 
\[
	\alpha_u^*:=\inf\Biggl\{\alpha\in\Biggl(\max\biggl(\frac{u-(1-p)}{p}, 0\biggr), \, \min\Bigl(1, \frac{u}{p}\Bigr)\Biggr) \, : \,  F_{X_{H}}^{-1}(\alpha)\ge F_{X_{L}}^{-1}\biggl(\frac{u-\alpha p}{1-p}\biggr)\Biggr\}
\]

In general $F_{S_{T}}^{-1}(u)=\max\bigl(F_{X_{H}}^{-1}(\alpha^*_{u}),F_{X_{L}}^{-1}\bigl(\frac{u-\alpha_{u}^* p}{1-p}\bigr)\bigr).$ In our case, the distributions involved are continuous, thus these two quantities are equal and in addition for all $u\in(0,1)$, $\alpha_{u}^*\in(0,1)$ with \[
	F_{S_{T}}^{-1}(u) = F_{X_{H}}^{-1}(\alpha_{u}^*) = S_{0}\exp\biggl(\sigma_{H}\sqrt{T}\Phi^{-1}\bigl(\alpha_{u}^*\bigr)- \frac{\sigma_{H}^2T}{2}+\mu T\biggr),
\]
where $\alpha^*_{u}$ is the unique root of $F_{X_{H}}^{-1}(\alpha_{u}^*) = F_{X_{L}}^{-1}\Bigl(\frac{u-\alpha_{u}^* p}{1-p}\Bigr)$, i.e.,
\[
	\sigma_{H}\sqrt{T}\Phi^{-1}(\alpha^*_{u})-\frac{\sigma_{H}^2T}{2}=\sigma_{L}\sqrt{T}\Phi^{-1}\Bigl(\frac{u-\alpha_{u}^* p}{1-p}\Bigr)-\frac{\sigma_{L}^2T}{2}.
\]
Similarly, $F_{\xi^q_{T}}^{-1}(u)=\frac{q}{p}\exp\bigl(\theta_{H}\sqrt{T}\Phi^{-1}\bigl(\gamma_{u}^*\bigr)-\theta_{H}^2T/2-rT\bigr)$ where $\gamma^*_{u}$ is the unique root of 
\[
	\frac{q}{p}\exp\biggl(\theta_{H}\sqrt{T}\Phi^{-1}\bigl(\gamma_{u}^* \bigr)-\frac{\theta_{H}^2T}{2}\biggr)=\frac{1-q}{1-p}\exp\biggl(\theta_{L}\sqrt{T}\Phi^{-1}\Bigl(\frac{u-\gamma_{u}^* p}{1-p}\Bigr)-\frac{\theta_{L}^2T}{2}\biggr).
\]
We can prove that
\begin{equation}\label{interm}
	\sup_{\xi^q \in \Xi}  \E\bigl[ \xi^q F_S^{-1}\bigl(1-F_{\xi^q}(\xi^q)\bigr)\bigr] =\E\bigl[ \xi^* F_S^{-1}\bigl(1-F_{\xi^*}(\xi^*)\bigr)\bigr]
\end{equation}
where $\xi^*=\xi^{q_{0}}$ for $q_{0}\in(0,1)$. To do so, observe that
\[
	g(q):=\E\bigl[ \xi^q F_S^{-1}\bigl(1-F_{\xi^q}(\xi^q)\bigr)\bigr]=\E\bigl[ F_{\xi^q}^{-1}(U) F_S^{-1}(1-U)\bigr]
\]
for some uniformly distributed r.v. $U$, so that
\[
	g^\prime(q)=\frac{d}{dq}\int_{0}^1F_{\xi^q}^{-1}(u) F_S^{-1}(1-u)du=\int_{0}^1\frac{d}{dq}\Bigl(F_{\xi^q}^{-1}(u)\Bigr) F_S^{-1}(1-u)du,
\]
which can be written as
\[
	g^\prime(q)=\int_{0}^1\biggl( \frac{1}{p}+\frac{q}{p}\theta_{H}\sqrt{T}\Bigl(\Phi^{-1}\Bigr)^\prime\bigl(\gamma_{u}^* \bigr)\frac{d}{dq}(\gamma_{u}^*)\biggr)e^{\theta_{H}\sqrt{T}\Phi^{-1}\bigl(\gamma_{u}^*\bigr)-\frac{\theta_{H}^2T}{2}-rT} F_S^{-1}(1-u)du,
\]
we have that $g^\prime(0)>0$ and $g^\prime(1)<0$ ensuring that the optimal value for $q$ is obtained in the interior of $(0,1).$

Furthermore, observe that if $(\bar\xi,\bar X)$ is a solution to
\begin{equation}\label{numeric}
	\inf_{\substack{Z \in \mathcal{X} \\ Z \in D(F_S)}}  \sup_{\xi \in \Xi} \E\bigl[ \xi Z\bigr],
\end{equation}
then $\bar\xi$ must be equal to $\xi^{0}$ or $\xi^{1},$ given that 
\[
	\E\bigl[ \xi^q Z\bigr]=\frac{q}{p} \E\bigl[Z Y_{H}\ind_{\{\sigma=\sigma_{H}\}}\bigr]  +\frac{1-q}{1-p} \E\bigl[Z Y_{L}\ind_{\{\sigma=\sigma_{L}\}}\bigr], 
\]
where $Y_{H}$ and $Y_{L}$ do not depend on $q$. This expression is linear in the variable $q$ and thus its  optimum is obtained at $q=0$ or $q=1$. Therefore $\bar\xi\neq\xi^*$. However, if there is an equality in \eqref{CEFF} then
\begin{equation}\label{numeric2}
	\sup_{\xi \in \Xi} \, \inf_{\substack{Z \in \mathcal{X} \\ Z \in D(F_S)}} \E\bigl[ \xi Z\bigr]=  \E\bigl[ \xi^* Z^*\bigr]=\E\bigl[ \bar\xi \bar Z\bigr],
\end{equation}
thus $(\bar\xi,\bar X)$ is also solution and by uniqueness of solution to \eqref{dist-problem} (\cite[Proposition~4.4]{BS24}), we must have $\bar\xi=\xi^*,$ which contradicts the above results. We are able to conclude that the cost of the cheapest attainable payoff to achieve the distribution of final wealth $F_{S}$ is strictly larger than the left side of the inequality in  \cite[Proposition~4.2]{BS24}.

\end{document}